\journal{}
\newcommand{\eps}{\varepsilon}
\newcommand{\epsa}{\eps_\mathrm{a}}
\newcommand{\epsb}{\eps_\mathrm{b}}
\newcommand{\mua}{\mu_\mathrm{a}}
\newcommand{\mub}{\mu_\mathrm{b}}
\newcommand{\sigmaa}{\sigma_\mathrm{a}}
\newcommand{\sigmab}{\sigma_\mathrm{b}}
\newcommand{\ka}{k_\mathrm{aw}}
\newcommand{\kb}{k_\mathrm{bw}}
\newcommand{\set}[1]{\left\{#1\right\}}
\newcommand{\rmd}{\mathrm{d}}
\newcommand{\mf}{{\mathbf{f}}}
\newcommand{\mE}{{\mathbf{E}}}
\newcommand{\mH}{{\mathbf{H}}}
\newcommand{\mU}{{\mathbf{U}}}
\newcommand{\mV}{{\mathbf{V}}}
\newcommand{\mW}{{\mathbf{W}}}
\newcommand{\ma}{{\mathbf{a}}}
\newcommand{\mr}{{\mathbf{r}}}
\newcommand{\vt}{{\boldsymbol{\theta}}}
\DeclareMathOperator*{\inc}{inc}
\DeclareMathOperator*{\tot}{tot}
\DeclareMathOperator*{\scat}{scat}
\DeclareMathOperator*{\noise}{noise}
\theoremstyle{plain}
\newtheorem{theorem}{Theorem}[section]
\theoremstyle{remark}
\newtheorem{example}{Example}[section]
\begin{document}

\begin{frontmatter}

%% Title, authors and addresses

%% use the tnoteref command within \title for footnotes;
%% use the tnotetext command for theassociated footnote;
%% use the fnref command within \author or \address for footnotes;
%% use the fntext command for theassociated footnote;
%% use the corref command within \author for corresponding author footnotes;
%% use the cortext command for theassociated footnote;
%% use the ead command for the email address,
%% and the form \ead[url] for the home page:
%% \title{Title\tnoteref{label1}}
%% \tnotetext[label1]{}
%% \author{Name\corref{cor1}\fnref{label2}}
%% \ead{email address}
%% \ead[url]{home page}
%% \fntext[label2]{}
%% \cortext[cor1]{}
%% \address{Address\fnref{label3}}
%% \fntext[label3]{}

\title{Application of MUSIC-type imaging for anomaly detection without background information}
%On the imaging of short sound-soft open arc via orthogonality sampling method
%% use optional labels to link authors explicitly to addresses:
%% \author[label1,label2]{}
%% \address[label1]{}
%% \address[label2]{}

\author{Won-Kwang Park}
\ead{parkwk@kookmin.ac.kr}
\address{Department of Information Security, Cryptography, and Mathematics, Kookmin University, Seoul, 02707, Korea}

\begin{abstract}
It has been demonstrated that the MUltiple SIgnal Classification (MUSIC) algorithm is fast, stable, and effective for localizing small anomalies in microwave imaging. For the successful application of MUSIC, exact values of permittivity, conductivity, and permeability of the background must be known. If one of these values is unknown, it will fail to identify the location of an anomaly. However, to the best of our knowledge, no explanation of this failure has been provided yet. In this paper, we consider the application of MUSIC to the localization of a small anomaly from scattering parameter data when complete information of the background is not available. Thanks to the framework of the integral equation formulation for the scattering parameter data, an analytical expression of the MUSIC-type imaging function in terms of the infinite series of Bessel functions of integer order is derived. Based on the theoretical result, we confirm that the identification of a small anomaly is significantly affected by the applied values of permittivity and conductivity. However, fortunately, it is possible to recognize the anomaly if the applied value of conductivity is small. Simulation results with synthetic data are reported to demonstrate the theoretical result.
\end{abstract}

\begin{keyword}
MUltiple SIgnal Classification (MUSIC) \sep microwave imaging \sep scattering parameter \sep simulation results
%% keywords here, in the form: keyword \sep keyword

%% PACS codes here, in the form: \PACS code \sep code

%% MSC codes here, in the form: \MSC code \sep code
%% or \MSC[2008] code \sep code (2000 is the default)
\end{keyword}

\end{frontmatter}

%% \linenumbers

%% main text

%% The Appendices part is started with the command \appendix;
%% appendix sections are then done as normal sections
%% \appendix

%% \section{}
%% \label{}

%\linenumbers

\section{Introduction}\label{sec:1}
Although the MUltiple SIgnal Classification (MUSIC) algorithm was developed for estimating the individual frequencies of multiple time-harmonic signals, it was successfully applied to an inverse scattering problem of localizing a set of point-like scatterers \cite{D}. From this pioneering research, MUSIC has been applied to various problems, for example, identification of arbitrarily shaped targets in inverse scattering problem \cite{AIL2,CZ,P-MUSIC7} as well as the microwave imaging  \cite{KCC,P-MUSIC6,SD}, detection of detecting internal corrosion \cite{AKKLV}, damage diagnosis on complex aircraft structures \cite{BYG,FZSY}, radar imagings \cite{CPPPDT,LWYL,ZZK}, impedance tomography \cite{H3}, ultrasound imaging \cite{LH2}, and medical imaging \cite{RSAAP,S2,SKLKLJC}.

Several studies have demonstrated that the MUSIC algorithm is fast, effective, and stable in both the inverse scattering problem and microwave imaging. However, for its successful application, one must \textcircled{1} discriminate nonzero singular values to determine the exact noise subspace and \textcircled{2} know a priori information of the background (exact values of background permittivity, conductivity, and permeability at a given frequency) to design an imaging function. Several studies \cite{GD,HSZ1,PL1,SMP,XXCT} have revealed certain properties of singular values and methods for appropriate threshold schemes. However, although some research has been performed on certain phenomena when an inaccurate frequency is applied (see \cite{P-MUSIC3,PP1,SRDCAR}, for instance), to the best of our knowledge it remains unexplained why one cannot localize a small anomaly when an inaccurate value of background permittivity or conductivity is applied. This provides the motivation for this study aimed at determining the effect of an applied inaccurate value of background permeability, permittivity, or conductivity.

The purpose of this paper is to establish a new mathematical theory of MUSIC in microwave imaging when the background information is unknown. To this end, we carefully explore the structure of MUSIC imaging function by constructing a relationship with infinite series of Bessel functions of integer order, antenna arrangement, and an applied inaccurate value of background wavenumber. This is based on the integral equation formula for the scattered-field $S-$parameter in the presence of a small anomaly and the structure of left-singular vector associated with the nonzero singular value of the scattering matrix. From the explored structure, we can explain that \textcircled{1} when an inaccurate value of background permeability or permittivity is applied, the identified location of the small anomaly is shifted in a specific direction, \textcircled{2} when an inaccurate value of background conductivity is applied, there is no shifting effect and it is possible to identify the location fairly precisely if the applied value is small, \textcircled{3} however it will be very difficult to recognize the existence of an anomaly if the applied value of conductivity is not small. To validate the theoretical results, various simulation results in the presence of single and multiple anomalies are presented.

This paper is organized as follows. In Section \ref{sec:2}, we briefly introduce the basic concept of scattering parameters in the presence of a small anomaly, introduce the imaging function of MUSIC. In Section \ref{sec:3}, the mathematical structure of the imaging function is explored by establishing a relationship with an infinite series of Bessel functions, antenna arrangement, and an applied inaccurate wavenumber. In Section \ref{sec:4}, we present various results of numerical simulations with synthetic data to confirm the theoretical results and discuss certain phenomena. In Section \ref{sec:5}, a short conclusion including future works is provided.

\section{Scattering parameter and the imaging function of MUSIC}\label{sec:2}
Let $D$ be a circle-like small anomaly with radius $\alpha$, location $\mr_\star$, permittivity $\eps_\star$, and conductivity $\sigma_\star$ at given angular frequency $\omega$. We set $D$ to be surrounded by a circular array of dipole antennas $\Lambda_n$, $n=1,2,\cdots,N$, with location $\ma_n$ and they are placed outside of the homogeneous region of interest (ROI) $\Omega$. In this paper, we assume that there exists no magnetic materials in $\Omega$ thus, the anomaly $D$ and the background $\Omega$ are characterized by the value of dielectric permittivity and electric conductivity at a given angular frequency $\omega=2\pi f$, where $f$ denotes the ordinary frequency measured in \texttt{hertz}. We denote $\epsb$ and $\sigmab$ as the permittivity and conductivity of $\Omega$, respectively, and $\eps_\star$ and $\sigma\star$ that satisfy
\begin{equation}\label{Condition}
\omega\epsb\gg\sigmab,\quad\text{and}\quad2\alpha\sqrt{\frac{\eps_\star}{\epsb}}<\text{wavelength},
\end{equation}
and set the value of magnetic permeability as a constant such that $\mu(\mr)\equiv\mub=\SI{1.257e-6}{\henry/\meter}$ for every $\mr\in\Omega$. With this, we denote $\kb$ be the background wavenumber that satisfies
\[\kb^2=\omega^2\mub\left(\epsb+i\frac{\sigmab}{\omega}\right)\]
and define the following piecewise permittivity and conductivity as $\eps(\mr)$ and $\sigma(\mr)$, respectively such that
\[\eps(\mr)=\left\{\begin{array}{rcl}
\smallskip\eps_\star & \text{for} & \mr\in D,\\
\epsb & \text{for} & \mr\in\Omega\backslash\overline{D},
\end{array}\right.
\quad\mbox{and}\quad
\sigma(\mr)=\left\{\begin{array}{rcl}
\sigma_\star & \mbox{for} & \mr\in D,\\
\sigmab & \mbox{for} & \mr\in\Omega\backslash\overline{D},
\end{array}\right.\]

Let $\mE_{\inc}(\kb,\mr,\ma_m)\in\mathbb{C}^{1\times3}$ be the incident electric field in $\Omega$ due to the point current density $\mathbf{J}$ at $\Lambda_m$ that satisfies
\[\left\{\begin{array}{l}
\smallskip\nabla\times\mE_{\inc}(\kb,\mr,\ma_m)=-i\omega\mub\mH_{\inc}(\kb,\mr,\ma_m),\\
\nabla\times\mH_{\inc}(\kb,\mr,\ma_m)=(\sigmab+i\omega\epsb)\mE_{\inc}(\kb,\mr,\ma_m),
\end{array}\right.\]
where ${\mH_{\inc}}\in\mathbb{C}^{1\times3}$ denotes the magnetic field. Analogously, let $\mE_{\tot}(\kb,\ma_n,\mr)\in\mathbb{C}^{1\times3}$ be the total electric field in the existence of $D$ measured at $\Lambda_n$ that satisfies
\[\left\{\begin{array}{l}
\smallskip\nabla\times\mE_{\tot}(\kb,\ma_n,\mr)=-i\omega\mub\mH_{\tot}(\kb,\ma_n,\mr),\\
\nabla\times\mH_{\tot}(\kb,\ma_n,\mr)=(\sigma(\mr)+i\omega\eps(\mr))\mE_{\tot}(\kb,\ma_n,\mr)
\end{array}\right.\]
with the transmission condition on the boundary of $D$.

Let $S_{\inc}(n,m)$ be the incident-field $S-$parameter, which is the scattering parameter without $D$ with transmitter number $m$ and receiver number $n$. Similarly, $S_{\tot}(n,m)$ be the total-field $S-$parameter, which is the scattering parameter in the presence of $D$ with transmitter number $m$ and receiver number $n$. In this paper, the measurement data to retrieve $D$ is the scattered-field $S-$parameter with transmitter number $m$ and receiver number $n$ denoted by $S_{\scat}(n,m)=S_{\tot}(n,m)-S_{\inc}(n,m)$. Then, on the basis of \cite{HSM2}, $S_{\scat}(n,m)$ can be represented by the following integral equation
\begin{equation}\label{IntegralExpression}
S_{\scat}(n,m)=\frac{i\kb^2}{4\omega\mub}\int_\Omega\left(\frac{\eps(\mr')-\epsb}{\epsb}+i\frac{\sigma(\mr')-\sigmab}{\omega\epsb}\right)\mE_{\inc}(\kb,\mr',\ma_m)\cdot\mE_{\tot}(\kb,\ma_n,\mr')\rmd\mr'.
\end{equation}
Notice that the exact expression of $\mE_{\tot}(\kb,\ma_n,\mr')$ is unknown, it is very hard to apply \eqref{IntegralExpression} to design MUSIC algorithm. Since the condition \eqref{Condition} holds, it is possible to apply the Born approximation to \eqref{IntegralExpression}. Then, $S_{\scat}(n,m)$ can be approximated as
\[S_{\scat}(n,m)\approx\frac{i\kb^2}{4\omega\mub}\int_D\left(\frac{\eps_\star-\epsb}{\epsb}+i\frac{\sigma_\star-\sigmab}{\omega\epsb}\right)\mE_{\inc}(\kb,\mr',\ma_m)\cdot\mE_{\inc}(\kb,\ma_n,\mr')\rmd\mr'.\]
It is worth to emphasize that based on the simulation configuration \cite{KLKJS}, only the $z$-component of the field $\mE_{\inc}(\kb,\mr,\ma_n)$ can be measured at $\Lambda_n$. Hence, by denoting it as $u(\kb,\mr',\ma_m)$ and applying the mean-value theorem, $S_{\scat}(n,m)$ can be written as
\begin{equation}\label{ScatteringParameter}
S_{\scat}(n,m)\approx\frac{i\alpha^2\kb^2\pi}{4\omega\mub}\left(\frac{\eps_\star-\epsb}{\epsb}+i\frac{\sigma_\star-\sigmab}{\omega\epsb}\right)u(\kb,\ma_m,\mr_\star)u(\kb,\ma_n,\mr_\star).
\end{equation}

To introduce the imaging function of MUSIC algorithm, we perform the singular value decomposition for the scattering matrix
\[\mathbb{K}=\begin{bmatrix}
0&S_{\scat}(1,2)&\cdots&S_{\scat}(1,N)\\
S_{\scat}(2,1)&0&\cdots&S_{\scat}(2,N)\\
\vdots&\vdots&\ddots&\vdots\\
S_{\scat}(N,1)&S_{\scat}(N,2)&\cdots&0
\end{bmatrix}=\mathbb{UDV}^*\approx\tau_1\mU_1\mV_1^*,\]
where $\tau_1$ denotes the nonzero singular value, and $\mU_1$ and $\mV_1$ are the first left- and right-singular vectors of the scattering matrix, respectively. We refer to \cite{P-MUSIC6} why the diagonal elements of $\mathbb{K}$ are set to zero. With this, by denoting $\mathbb{I}$ as the $N\times N$ identity matrix, we can define projection operator $\mathbb{P}_{\noise}$ onto the noise subspace:
\[\mathbb{P}_{\noise}=\mathbb{I}-\mU_1\mU_1^*.\]
Then, based on the structure of the approximation \eqref{ScatteringParameter}, we introduce a unit test vector: for each $\mr\in\Omega$,
\begin{equation}\label{TestVector}
\mW(\kb,\mr)=\frac{\mf(\kb,\mr)}{|\mf(\kb,\mr)|},\quad\text{where}\quad\mf(\kb,\mr)=\Big[u(\kb,\ma_1,\mr_\star),u(\kb,\ma_2,\mr_\star),\ldots,u(\kb,\ma_N,\mr_\star)\Big]^T.
\end{equation}
Then, since
\[\mW(\kb,\mr)\in\text{Range}(\mathbb{K})\quad\text{if and only if}\quad\mr=\mr_\star\in D,\]
we can examine that $|\mathbb{P}_{\noise}(\mW(\kb,\mr_\star))|=0$ and by plotting the following imaging function of MUSIC
\[\mathfrak{F}(\kb,\mr)=\frac{1}{|\mathbb{P}_{\noise}(\mW(\kb,\mr))|},\quad\mr\in\Omega,\]
the location $\mr_\star\in D$ can be identified. We refer to \cite{P-MUSIC6,AK2,C,ZC} for detailed descriptions.

It is important that, to generate the test vector $\mW(\kb,\mr)$, on the basis of \eqref{TestVector}, the exact value of $\kb$ must be known. This means that exact values of $\omega$, $\mub$, $\epsb$, and $\sigmab$ must be known. However, their exact values are sometimes unknown because these values are significantly dependent on the frequency, temperature, and other factors. Now, we assume that the exact values of $\epsb$ and $\sigmab$ are unknown, and apply an alternative value $\ka$ instead of the true $\kb$. Correspondingly, we set a unit test vector $\mW(\ka,\mr)$ from \eqref{TestVector} and consider the imaging function of MUSIC
\begin{equation}\label{ImagingFunction}
\mathfrak{F}(\ka,\mr)=\frac{1}{|\mathbb{P}_{\noise}(\mW(\ka,\mr))|},\quad\mr\in\Omega.
\end{equation}

Notice that, the exact location of $D$ cannot be retrieved through the map of $\mathfrak{F}(\ka,\mr)$, but we can recognize the existence of an anomaly and the identified location is shifted in a specific direction. However, some phenomena exhibited in Section \ref{sec:4} cannot be explained yet.

\section{Theoretical result: structure of the imaging function with inaccurate wavenumber}\label{sec:3}
In this section, we explore the structure of the imaging function $\mathfrak{F}(\ka,\mr)$ to explain the theoretical reason of some phenomena. The result is following.

\begin{theorem}\label{Structure}Let $\vt_n=\ma_n/|\ma_n|=\ma_n/R=(\cos\theta_n,\sin\theta_n)$ and $\ka\mr-\kb\mr_\star=|\ka\mr-\kb\mr_\star|(\cos\phi,\sin\phi)$. {If $\ma_n$ satisfies $|\ma_n-\mr|\gg\set{1/4|\kb|,1/4|\ka|}$ for $n=1,2,\cdots,N$,} $\mathfrak{F}(\ka,\mr)$ can be represented as follows:
\begin{equation}\label{StructureImagingFunction}
\mathfrak{F}(\ka,\mr)\approx\frac{N^2-2N+1}{N^2-2N}
\bigg(1-\bigg|J_0(|\ka\mr-\kb\mr_\star|)+\frac{1}{N}\sum_{n=1}^{N}\sum_{q\in\mathbb{Z}_0}i^qJ_q(|\ka\mr-\kb\mr_\star|)e^{iq(\theta_n-\phi)}\bigg|^2\bigg)^{-1/2},
\end{equation}
where $J_s$ denotes the Bessel function of order $s$ and $\mathbb{Z}_0$ denotes the set of integer number except $0$.
\end{theorem}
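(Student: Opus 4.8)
The plan is to collapse the imaging function \eqref{ImagingFunction} to a single scalar and then expand that scalar with the Jacobi--Anger formula. First I would write $\mathfrak{F}(\ka,\mr)=1/|\mathbb{P}_{\noise}(\mW(\ka,\mr))|$ and realize the noise projection directly from the data as $\mathbb{P}_{\noise}=\mathbb{I}-\tau_1^{-2}\mathbb{K}\mathbb{K}^*$, which coincides with the stated $\mathbb{I}-\mU_1\mU_1^*$ precisely when $\mathbb{K}$ is rank one. It is this data-based form that will carry the finite-$N$ constant in \eqref{StructureImagingFunction}, so the whole computation then reduces to the leading singular vector $\mU_1$ and the inner product $\mU_1^*\mW(\ka,\mr)$.

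To pin both down I would compute the singular value decomposition of $\mathbb{K}$. By the Born approximation \eqref{ScatteringParameter}, $\mathbb{K}$ is the symmetric rank-one outer product of $\mf(\kb,\mr_\star)=[u(\kb,\ma_1,\mr_\star),\dots,u(\kb,\ma_N,\mr_\star)]^T$ with its diagonal deleted. Writing $\mathbb{K}=c\,G(\mathbf{1}\mathbf{1}^T-\mathbb{I})G$, where $c$ is the common coefficient in \eqref{ScatteringParameter}, $G=\mathrm{diag}(u(\kb,\ma_n,\mr_\star))$, and $\mathbf{1}$ is the all-ones vector, the moduli $|u(\kb,\ma_n,\mr_\star)|$ are essentially equal under \eqref{Condition}, so $G$ is a scalar multiple of a unitary and the decomposition of $\mathbb{K}$ follows from that of the real matrix $\mathbf{1}\mathbf{1}^T-\mathbb{I}$. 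Its leading singular pair is the value $N-1$ with vector $\mathbf{1}/\sqrt{N}$, whence $\mU_1=\mf(\kb,\mr_\star)/|\mf(\kb,\mr_\star)|$ and $\tau_1\propto N-1$. Using $(\mathbf{1}\mathbf{1}^T-\mathbb{I})^2=(N-2)\mathbf{1}\mathbf{1}^T+\mathbb{I}$ one finds $\tau_1^{-2}\mathbb{K}\mathbb{K}^*=\big(N(N-2)\mU_1\mU_1^*+\mathbb{I}\big)/(N-1)^2$, hence $\mathbb{P}_{\noise}=\frac{N(N-2)}{(N-1)^2}(\mathbb{I}-\mU_1\mU_1^*)$, a positive multiple of the genuine orthogonal projector onto $\mathrm{Range}(\mU_1)^\perp$. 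Since $|\mW(\ka,\mr)|=1$, this gives $|\mathbb{P}_{\noise}(\mW(\ka,\mr))|=\frac{N(N-2)}{(N-1)^2}\sqrt{1-|\mU_1^*\mW(\ka,\mr)|^2}$, so the reciprocal produces exactly the prefactor $\frac{N^2-2N+1}{N^2-2N}$ of \eqref{StructureImagingFunction}.

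It remains to evaluate $\mU_1^*\mW(\ka,\mr)$. Under the separation hypothesis $|\ma_n-\mr|\gg\set{1/4|\kb|,1/4|\ka|}$ I would replace $u$ by the leading large-argument asymptotics of the two-dimensional Green's function (the Hankel function $H_0^{(1)}$) and linearize $|\ma_n-\mr|\approx R-\vt_n\cdot\mr$, obtaining $u(\kb,\ma_n,\mr_\star)\propto e^{-i\kb\vt_n\cdot\mr_\star}$ and $u(\ka,\ma_n,\mr)\propto e^{-i\ka\vt_n\cdot\mr}$ with $n$-independent prefactors that cancel in the normalizations of \eqref{TestVector}; the moduli being essentially constant gives $|\mf(\kb,\mr_\star)|=|\mf(\ka,\mr)|=\sqrt{N}$. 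Hence $\mU_1^*\mW(\ka,\mr)\approx\frac1N\sum_{n=1}^N e^{\pm i(\ka\mr-\kb\mr_\star)\cdot\vt_n}=\frac1N\sum_{n=1}^N e^{\pm i|\ka\mr-\kb\mr_\star|\cos(\theta_n-\phi)}$, using $(\ka\mr-\kb\mr_\star)\cdot\vt_n=|\ka\mr-\kb\mr_\star|\cos(\theta_n-\phi)$. Applying the Jacobi--Anger expansion $e^{iz\cos\psi}=\sum_{q\in\mathbb{Z}}i^qJ_q(z)e^{iq\psi}$ termwise and separating the $q=0$ contribution (whose $n$-sum collapses to $J_0(|\ka\mr-\kb\mr_\star|)$) from $q\in\mathbb{Z}_0$ reproduces the bracketed series of \eqref{StructureImagingFunction}; taking the modulus absorbs the sign $\pm$ and the leftover unimodular phases, and substituting into the reduction of the previous paragraph yields \eqref{StructureImagingFunction}.

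The main obstacle I anticipate is the singular-value analysis. The delicate point is that the deleted diagonal makes $\mathbb{K}$ only approximately rank one, so one must both justify that the perturbed leading singular vector stays aligned with $\mf(\kb,\mr_\star)/|\mf(\kb,\mr_\star)|$ when the moduli $|u(\kb,\ma_n,\mr_\star)|$ are only nearly constant, and track the finite-$N$ discrepancy precisely enough to land on the exact rational factor rather than a mere $1+O(N^{-2})$ estimate. A secondary difficulty is controlling the Hankel-asymptotic and far-field linearization errors uniformly in $n$ under the stated separation condition, so that interchanging the finite $n$-sum with the infinite Bessel series is legitimate.
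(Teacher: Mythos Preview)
Your proposal is correct and follows the same skeleton as the paper: both replace $\mU_1\mU_1^*$ by $|\tau_1|^{-2}\mathbb{KK}^*$, invoke the far-field asymptotics of the Hankel function to reduce $\mU_1^*\mW(\ka,\mr)$ to $N^{-1}\sum_n e^{i\vt_n\cdot(\ka\mr-\kb\mr_\star)}$, and then apply Jacobi--Anger. The one substantive difference is how the rational prefactor emerges. The paper computes $\mathbb{MM}^*$ entrywise, carries the unknown scalar $C=\big|\alpha^2\mathcal{O}\kb/(32R\omega\mub\tau_1)\big|^2$ through the calculation, and only at the very end fixes $C$ by demanding $|\mathbb{P}_{\noise}(\mW)|=0$ at $\ka\mr=\kb\mr_\star$, which forces $C(N-1)^2=1$. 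Your route is cleaner and avoids that mild circularity: by writing $\mathbb{K}=cG(\mathbf{1}\mathbf{1}^T-\mathbb{I})G$ with $G$ a scalar multiple of a unitary, you read $\tau_1\propto N-1$ directly off the spectrum of $\mathbf{1}\mathbf{1}^T-\mathbb{I}$ and the identity $(\mathbf{1}\mathbf{1}^T-\mathbb{I})^2=(N-2)\mathbf{1}\mathbf{1}^T+\mathbb{I}$ then delivers $|\tau_1|^{-2}\mathbb{KK}^*=\big(N(N-2)\mU_1\mU_1^*+\mathbb{I}\big)/(N-1)^2$ without appeal to the answer. This also makes transparent that the factor $(N^2-2N+1)/(N^2-2N)$ is exactly the discrepancy introduced by using $|\tau_1|^{-2}\mathbb{KK}^*$ in place of the true rank-one projector when the zero-diagonal $\mathbb{K}$ has subleading singular values of relative size $1/(N-1)$. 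A small remark: the paper works with $H_0^{(2)}$ rather than $H_0^{(1)}$, but as you note the resulting global sign in the exponent is absorbed by the modulus.
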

\begin{proof}
Based on \cite{PKLS}, the incident field $u(\kb,\mr,\mr')$ can be written as
\begin{equation}\label{IncidentField}
u(\kb,\mr,\mr')=\frac{i}{4}H_0^{(2)}(\kb|\mr-\mr'|),\quad\mr\ne\mr',
\end{equation}
where $H_0^{(2)}$ denotes the Hankel function of order zero of the second kind. Since $|\ma_n-\mr|,|\ma_n-\mr_\star|\gg\set{1/4|\kb|,1/4|\ka|}$ for all $n=1,2,\cdots,N$, the following asymptotic forms of the Hankel function hold (see \cite[Theorem 2.5]{CK}, for instance)
\begin{equation}\label{AsymptoticHankel}
\frac{i}{4}H_0^{(2)}(\kb|\ma_n-\mr'|)\approx\frac{(-1+i)e^{-i\kb|\ma_n|}}{4\sqrt{\kb\pi|\ma_n|}}e^{i\kb\vt_n\cdot\mr'}\quad\text{and}\quad\frac{i}{4}H_0^{(2)}(\ka|\ma_n-\mr'|)\approx\frac{(-1+i)e^{-i\ka|\ma_n|}}{4\sqrt{\ka\pi|\ma_n|}}e^{i\kb\vt_n\cdot\mr'},
\end{equation}
the unit test vector $\mW(\ka,\mr)$ becomes
\begin{equation}\label{UnitTestVector}
\mW(\ka,\mr)\approx\frac{1}{\sqrt{N}}\bigg[e^{i\ka\vt_1\cdot\mr},e^{i\ka\vt_2\cdot\mr},\cdots,e^{i\ka\vt_N\cdot\mr}\bigg]^T,
\end{equation}
and the scattering matrix $\mathbb{K}$ can be written as
\[\mathbb{K}\approx\frac{\alpha^2\kb e^{-2i\kb R}}{32R\omega\mub}\left(\frac{\eps_\star-\epsb}{\epsb}+i\frac{\sigma_\star-\sigmab}{\omega\epsb}\right)\begin{bmatrix}
 0 & e^{ik(\vt_1+\vt_2)\cdot\mr_\star} & \cdots & e^{ik(\vt_1+\vt_N)\cdot\mr_\star}\\
e^{ik(\vt_2+\vt_2)\cdot\mr_\star} & 0 & \cdots & e^{ik(\vt_2+\vt_N)\cdot\mr_\star}\\
\vdots&\vdots&\ddots&\vdots\\
e^{ik(\vt_N+\vt_1)\cdot\mr_\star} & e^{ik(\vt_N+\vt_2)\cdot\mr_\star} & \cdots & 0\\
\end{bmatrix}.\]

Let us denote
\[\mathcal{O}=\frac{\eps_\star-\epsb}{\epsb}+i\frac{\sigma_\star-\sigmab}{\omega\epsb}\quad\text{and}\quad\mathbb{M}=\begin{bmatrix}
 0 & e^{ik(\vt_1+\vt_2)\cdot\mr_\star} & \cdots & e^{ik(\vt_1+\vt_N)\cdot\mr_\star}\\
e^{ik(\vt_2+\vt_2)\cdot\mr_\star} & 0 & \cdots & e^{ik(\vt_2+\vt_N)\cdot\mr_\star}\\
\vdots&\vdots&\ddots&\vdots\\
e^{ik(\vt_N+\vt_1)\cdot\mr_\star} & e^{ik(\vt_N+\vt_2)\cdot\mr_\star} & \cdots & 0\\
\end{bmatrix}.\]
Then, by performing an elementary calculus, we can examine that
\begin{align*}
\mathbb{M}\mathbb{M}^*&=\begin{bmatrix}
N-1 & (N-2)e^{ik(\vt_1-\vt_2)\cdot\mr_\star} & \cdots & (N-2)e^{ik(\vt_1-\vt_N)\cdot\mr_\star}\\
(N-2)e^{ik(\vt_2-\vt_1)\cdot\mr_\star} & N-1 & \cdots & (N-2)e^{ik(\vt_2-\vt_N)\cdot\mr_\star}\\
\vdots&\vdots&\ddots&\vdots\\
(N-2)e^{ik(\vt_N-\vt_1)\cdot\mr_\star} & (N-2)e^{ik(\vt_N-\vt_2)\cdot\mr_\star} & \cdots & N-1\\
\end{bmatrix}\\
&=\mathbb{I}+(N-2)\begin{bmatrix}
e^{ik(\vt_1-\vt_1)\cdot\mr_\star} & e^{ik(\vt_1-\vt_2)\cdot\mr_\star} & \cdots & e^{ik(\vt_1-\vt_N)\cdot\mr_\star}\\
e^{ik(\vt_2-\vt_1)\cdot\mr_\star} & e^{ik(\vt_2-\vt_2)\cdot\mr_\star} & \cdots & e^{ik(\vt_2-\vt_N)\cdot\mr_\star}\\
\vdots&\vdots&\ddots&\vdots\\
e^{ik(\vt_N-\vt_1)\cdot\mr_\star} & e^{ik(\vt_N-\vt_2)\cdot\mr_\star} & \cdots & e^{ik(\vt_N-\vt_N)\cdot\mr_\star}\\
\end{bmatrix}
\end{align*}
and correspondingly, we have
\begin{align*}
\mU_1\mU_1^*&=\frac{1}{|\tau_1|^2}\mathbb{KK}^*\approx\left|\frac{\alpha^2\mathcal{O}\kb}{32R\omega\mub\tau_1}\right|^2\mathbb{MM}^*\\
&=C\mathbb{I}+C(N-2)\begin{bmatrix}
e^{ik(\vt_1-\vt_1)\cdot\mr_\star} & e^{ik(\vt_1-\vt_2)\cdot\mr_\star} & \cdots & e^{ik(\vt_1-\vt_N)\cdot\mr_\star}\\
e^{ik(\vt_2-\vt_1)\cdot\mr_\star} & e^{ik(\vt_2-\vt_2)\cdot\mr_\star} & \cdots & e^{ik(\vt_2-\vt_N)\cdot\mr_\star}\\
\vdots&\vdots&\ddots&\vdots\\
e^{ik(\vt_N-\vt_1)\cdot\mr_\star} & e^{ik(\vt_N-\vt_2)\cdot\mr_\star} & \cdots & e^{ik(\vt_N-\vt_N)\cdot\mr_\star}\\
\end{bmatrix},\quad C=\left|\frac{\alpha^2\mathcal{O}\kb}{32R\omega\mub\tau_1}\right|^2\in\mathbb{R}.
\end{align*}

Since the following JacobiAnger expansion formula holds uniformly,
\begin{equation}\label{JacobiAnger}
e^{ix\cos\theta}=J_0(x)+\sum_{q\in\mathbb{Z}_0}i^qJ_q(x)e^{iq\theta},
\end{equation}
we can evaluate that
\begin{align}
\begin{aligned}\label{TermBessel}
\sum_{n=1}^{N}e^{i\vt_n\cdot(\ka\mr-\kb\mr_\star)}
&=\sum_{n=1}^{N}e^{i|\ka\mr-\kb\mr_\star|\cos(\theta_n-\phi)}\\
&=\bigg(NJ_0(|\ka\mr-\kb\mr_\star|)+\sum_{n=1}^{N}\sum_{q\in\mathbb{Z}_0}i^qJ_q(|\ka\mr-\kb\mr_\star|)e^{iq(\theta_n-\phi)}\bigg)\\
&=N\bigg(J_0(|\ka\mr-\kb\mr_\star|)+\frac{1}{N}\sum_{n=1}^{N}\sum_{q\in\mathbb{Z}_0}i^qJ_q(|\ka\mr-\kb\mr_\star|)e^{iq(\theta_n-\phi)}\bigg)\\
&:=N\bigg(J_0(|\ka\mr-\kb\mr_\star|)+\mathcal{E}(\ka\mr,\kb\mr_\star)\bigg).
\end{aligned}
\end{align}
With this, by applying \eqref{UnitTestVector} and \eqref{TermBessel}, we can evaluate
\begin{multline*}
\left(\mathbb{I}-\mU_1\mU_1^*\right)\mW(\ka,\mr)
\approx\frac{(1-C)}{\sqrt{N}}\begin{bmatrix}
\medskip e^{i\ka\vt_1\cdot\mr}\\
e^{i\ka\vt_2\cdot\mr}\\
\medskip\vdots\\
e^{i\ka\vt_N\cdot\mr}\end{bmatrix}\\
-C(N-2)\sqrt{N}\begin{bmatrix}
\medskip e^{i\kb\vt_1\cdot\mr_\star}\Big(J_0(|\ka\mr-\kb\mr_\star|)+\mathcal{E}(\ka\mr,\kb\mr_\star)\Big)\\
 e^{i\kb\vt_2\cdot\mr_\star}\Big(J_0(|\ka\mr-\kb\mr_\star|)+\mathcal{E}(\ka\mr,\kb\mr_\star)\Big)\\
\medskip\vdots\\
 e^{i\kb\vt_N\cdot\mr_\star}\Big(J_0(|\ka\mr-\kb\mr_\star|)+\mathcal{E}(\ka\mr,\kb\mr_\star)\Big)
\end{bmatrix}
\end{multline*}
and correspondingly,
\begin{align*}
|\mathbb{P}_{\noise}(\mW(\ka,\mr))|&=\Big(\mathbb{P}_{\noise}(\mW(\ka,\mr))\cdot\overline{\mathbb{P}_{\noise}(\mW(\ka,\mr))}\Big)^{1/2}\\
&=\bigg[\sum_{n=1}^{N}\bigg(\frac{(1-C)^2}{N}-(\Psi_1+\overline{\Psi}_1)+\Psi_2\overline\Psi_2\bigg)\bigg]^{1/2},\end{align*}
where
\begin{align*}
\Psi_1&=(1-C)C(N-2)e^{i\vt_n\cdot(\ka\mr-\kb\mr_\star)}\Big(\overline{J_0(|\ka\mr-\kb\mr_\star|)+\mathcal{E}(\ka\mr,\kb\mr_\star)}\Big)\\
\Psi_2&=C(N-2)Ne^{i\kb\vt_n\cdot\mr_\star}\Big(J_0(|\ka\mr-\kb\mr_\star|)+\mathcal{E}(\ka\mr,\kb\mr_\star)\Big).
\end{align*}

Applying \eqref{TermBessel} again, we can easily obtain that
\begin{align*}
&\sum_{n=1}^{N}(\Psi_1+\overline{\Psi}_1)=2(1-C)C(N-2)N\Big|J_0(|\ka\mr-\kb\mr_\star|)+\mathcal{E}(\ka\mr,\kb\mr_\star)\Big|^2\\
&\sum_{n=1}^{N}\Psi_2\overline\Psi_2=C^2(N-2)^2N^2\Big|J_0(|\ka\mr-\kb\mr_\star|)+\mathcal{E}(\ka\mr,\kb\mr_\star)\Big|^2.
\end{align*}
Therefore,
\begin{multline*}
|\mathbb{P}_{\noise}(\mW(\ka,\mr))|\approx\bigg((1-C)^2-2C(1-C)(N-2)N\Big|J_0(|\ka\mr-\kb\mr_\star|)+\mathcal{E}(\ka\mr,\kb\mr_\star)\Big|^2\\
+C^2(N-2)^2N^2\Big|J_0(|\ka\mr-\kb\mr_\star|)+\mathcal{E}(\ka\mr,\kb\mr_\star)\Big|^2\bigg)^{1/2}.
\end{multline*}
Finally, since $|\mathbb{P}_{\noise}(\mW(\ka,\mr))|=0$ and $|J_0(|\ka\mr-\kb\mr_\star|)+\mathcal{E}(\ka\mr,\kb\mr_\star)|=1$ when $\ka\mr=\kb\mr_\star$,
\[(1-C)^2-2C(1-C)(N-2)N+C^2(N-2)^2N^2=0\]
or equivalently, $\big((1-C)-C(N-2)N\big)^2=0$. Hence, $C(N-1)^2=1$ and correspondingly,
\begin{multline*}
|\mathbb{P}_{\noise}(\mW(\ka,\mr))|\approx|1-C|^2\left(1-\Big|J_0(|\ka\mr-\kb\mr_\star|)+\mathcal{E}(\ka\mr,\kb\mr_\star)\Big|^2\right)^{1/2}\\
=\frac{N^2-2N}{N^2-2N+1}\bigg(1-\bigg|J_0(|\ka\mr-\kb\mr_\star|)+\frac{1}{N}\sum_{n=1}^{N}\sum_{q\in\mathbb{Z}_0}i^qJ_q(|\ka\mr-\kb\mr_\star|)e^{iq(\theta_n-\phi)}\bigg|^2\bigg)^{1/2}.
\end{multline*}
With this, we can obtain the structure \eqref{StructureImagingFunction}.
\end{proof}

Based on the identified structure $\mathfrak{F}(\ka,\mr)$, we can say that the location $\mr=(\kb/\ka)\mr_\star$ will be identified instead of the true one $\mr_\star$ because $J_0(|\ka\mr-\kb\mr_\star|)=1$ and $\mathcal{E}(\mr)=0$ when $|\ka\mr-\kb\mr_\star|=0$. This is the reason why the identified location of the anomaly is shifted. Note that, if the anomaly is located at the origin, its location can be identified for any value $\ka$. Further properties will be discussed in the simulation results.

\section{Simulation results and discussions}\label{sec:4}
In this section, we present the results of simulation with synthetic data to check the theoretical result. To this end, $N = 16$ dipole antennas were used to transmit/receive signals at $f =\SI{1}{\giga\hertz}$ such that
\[\ma_n=\SI{0.09}{\meter}\left(\cos\frac{2n\pi}{N},\sin\frac{2n\pi}{N}\right),\quad n=1,2,\cdots,N.\]
The ROI $\Omega$ was set to be an interior of a circle with $(\epsb,\sigmab)=(20\eps_0,\SI{0.2}{\siemens/\meter})$ and radius $\SI{0.085}{\meter}$ centered at the origin. Here, $\eps_0=\SI{8.854e-12}{\farad/\meter}$ is the vacuum permittivity. For the anomaly, we select a small ball $D$ with $\mr_1=(\SI{0.01}{\meter},\SI{0.03}{\meter})$, $\alpha_1=\SI{0.01}{\meter}$, and $(\eps_1,\sigma_1)=(55\eps_0,\SI{1.2}{\siemens/\meter})$. For multiple anomalies, we select another small ball $D_2$ with $\mr_2=(-\SI{0.04}{\meter},-\SI{0.02}{\meter})$, $\alpha_2=\alpha_1$, and $(\eps_2,\sigma_2)=(45\eps_0,\SI{1.0}{\siemens/\meter})$. We refer to Figure \ref{IllustrationAnomalies} for an illustration of simulation configurations.

\begin{figure}[h]
\begin{center}
\begin{tikzpicture}[scale=2.4]
% Transmission antennas
\foreach \alpha in {0,22.5,...,337.5}
{\draw[green,fill=green] ({cos(\alpha)},{sin(\alpha)}) circle (0.05cm);
\draw[black,fill=black] ({cos(\alpha)},{sin(\alpha)}) circle (0.02cm);}
\node at (0,0) {\footnotesize$(\epsb,\sigmab)=(20\eps_0,\SI{0.2}{\siemens/\meter})$};
\end{tikzpicture}\hfill
\begin{tikzpicture}[scale=2.4]
% Transmission antennas
\foreach \alpha in {0,22.5,...,337.5}
{\draw[green,fill=green] ({cos(\alpha)},{sin(\alpha)}) circle (0.05cm);
\draw[black,fill=black] ({cos(\alpha)},{sin(\alpha)}) circle (0.02cm);}
\draw[violet,fill=violet] (0.1,0.3) circle (0.1cm);% node[right,black,xshift=4,yshift=-1] {\small$\Sigma_1$};
\node at (0,0.1) {\footnotesize$(\eps_1,\sigma_1)=(55\eps_0,\SI{1.2}{\siemens/\meter})$};
\end{tikzpicture}\hfill
\begin{tikzpicture}[scale=2.4]
% Transmission antennas
\foreach \alpha in {0,22.5,...,337.5}
{\draw[green,fill=green] ({cos(\alpha)},{sin(\alpha)}) circle (0.05cm);
\draw[black,fill=black] ({cos(\alpha)},{sin(\alpha)}) circle (0.02cm);}
\draw[violet,fill=violet] (0.1,0.3) circle (0.1cm);% node[right,black,xshift=4,yshift=-1] {\small$\Sigma_1$};
\node at (0,0.1) {\footnotesize$(\eps_1,\sigma_1)=(55\eps_0,\SI{1.2}{\siemens/\meter})$};
\draw[magenta,fill=magenta] (-0.4,-0.2) circle (0.1cm);% node[right,black,xshift=4,yshift=-1] {\small$\Sigma_2$};
\node at (0,-0.4) {\footnotesize$(\eps_2,\sigma_2)=(45\eps_0,\SI{1.0}{\siemens/\meter})$};
\end{tikzpicture}
\caption{\label{IllustrationAnomalies}Illustration of the background (left), single (center) and multiple small anomalies (right).}
\end{center}
\end{figure}
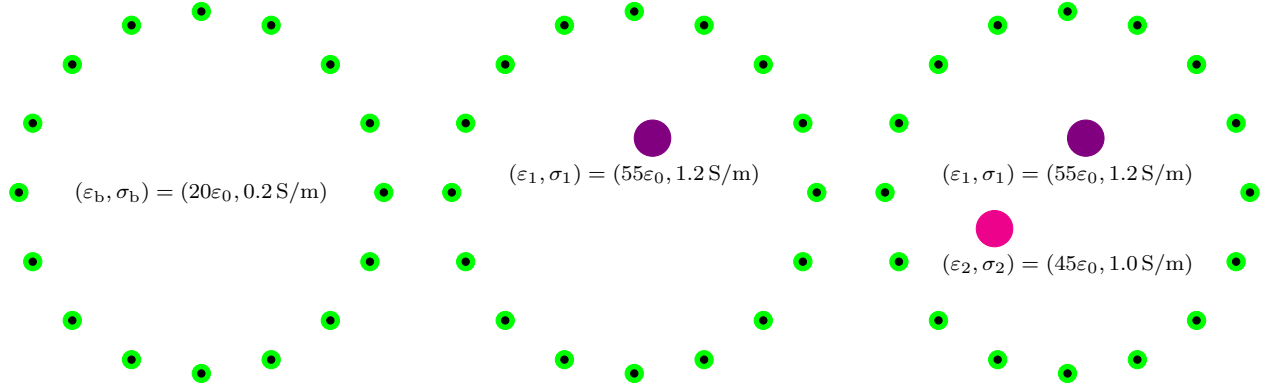

\begin{example}[Application of inaccurate background permeability]\label{EX-MU}
First, we assume that only the true value of $\mub$ is unknown, that is, we applied alternative wavenumber $\ka$ that satisfies
\[\ka^2=\omega^2\mua\left(\epsb+i\frac{\sigmab}{\omega}\right).\]
Then, identified location of anomaly becomes
\[\mr=\left(\frac{\kb}{\ka}\right)\mr_\star=\sqrt{\frac{\mub}{\mua}}\mr_\star.\]
Hence, the identified location will approach the origin if $\mua>\mub$ and. Otherwise, the identified location will be far from the origin if $\mua<\mub$.

Figure \ref{Result_MU1} shows maps of $\mathfrak{F}(\ka,\mr)$ with various $\ka$ in the presence of $D_1$. As we already mentioned, as the value of $\mua$ increases, the identified location approaches the origin. Otherwise, as the value of $\mua$ decreases, the identified location becomes far from the origin. It is interesting to examine the size of the identified anomaly becomes small and large as $\mua$ increases and decreases, respectively. We can observe the same phenomenon in the presence of multiple anomalies $D_1$ and $D_2$, as shown in Figure \ref{Result_MU2}. 
\end{example}

\begin{figure}[h]
\begin{center}
\subfigure[$\mua=\mub$]{\includegraphics[width=0.33\textwidth]{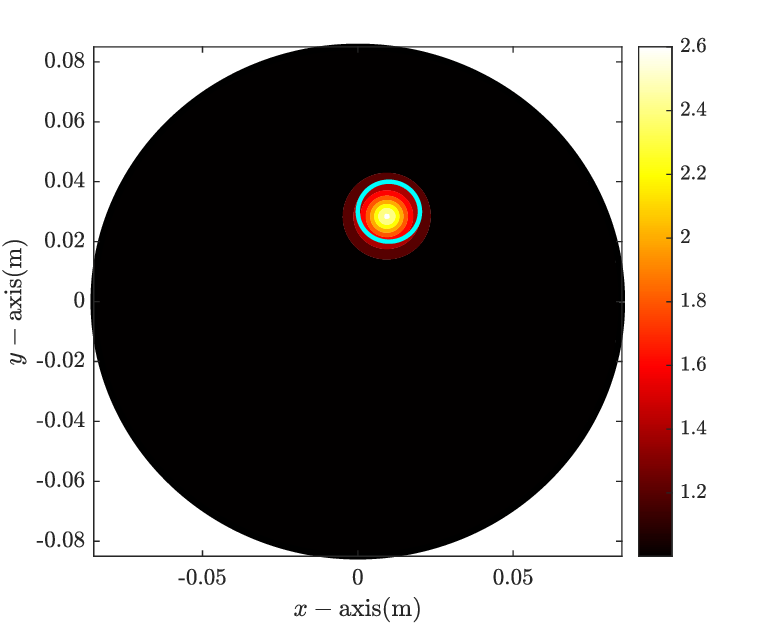}}\hfill
\subfigure[$\mua=2\mub$]{\includegraphics[width=0.33\textwidth]{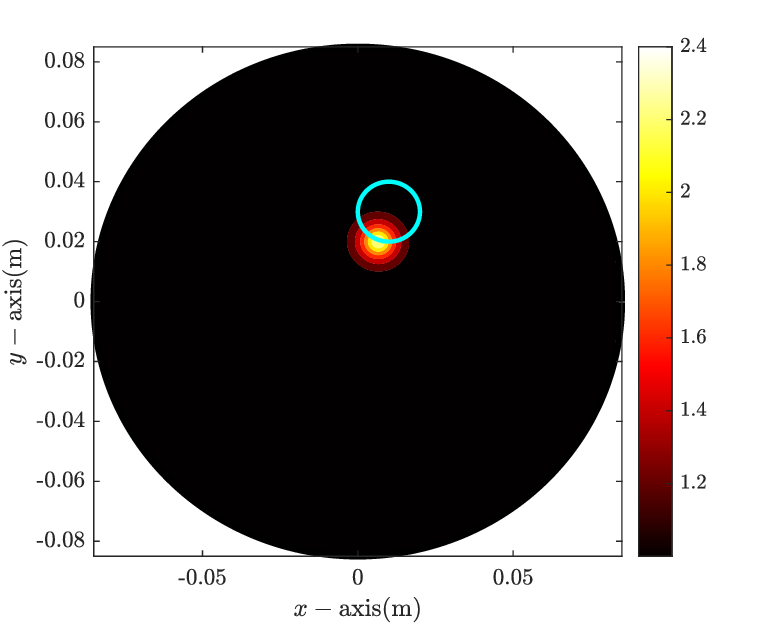}}\hfill
\subfigure[$\mua=10\mub$]{\includegraphics[width=0.33\textwidth]{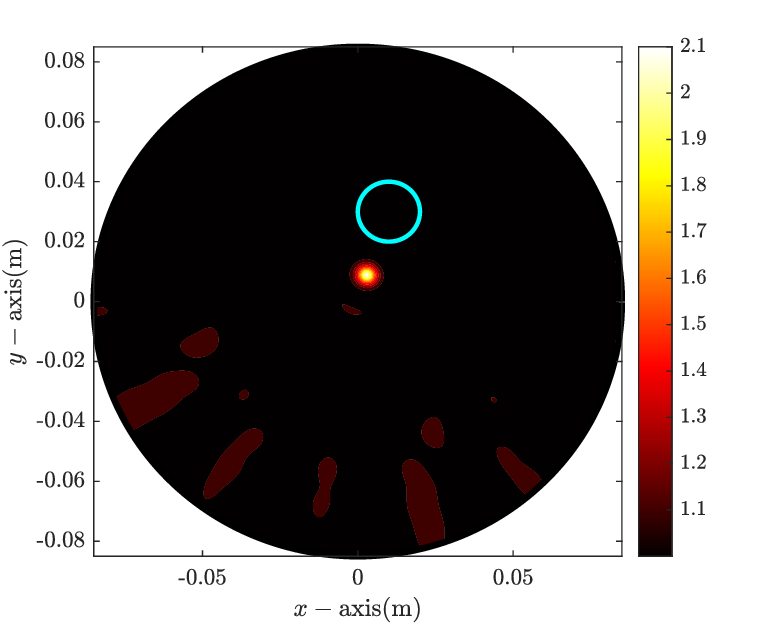}}\\
\subfigure[$\mua=0.5\mub$]{\includegraphics[width=0.33\textwidth]{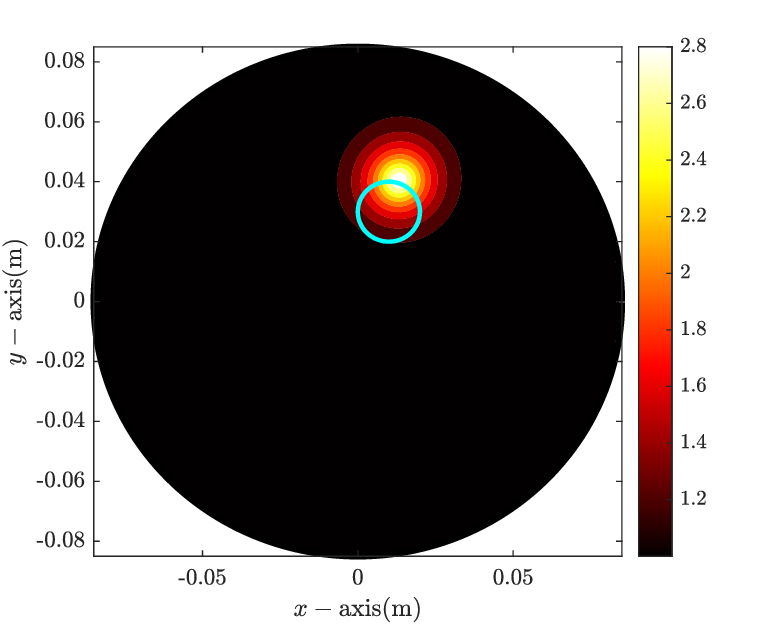}}\hfill
\subfigure[$\mua=0.2\mub$]{\includegraphics[width=0.33\textwidth]{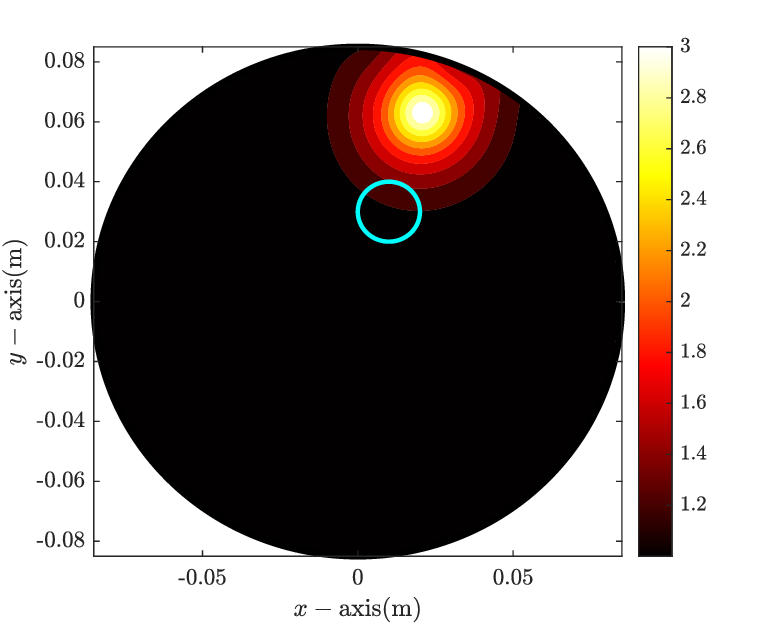}}
\subfigure[$\mua=0.1\mub$]{\includegraphics[width=0.33\textwidth]{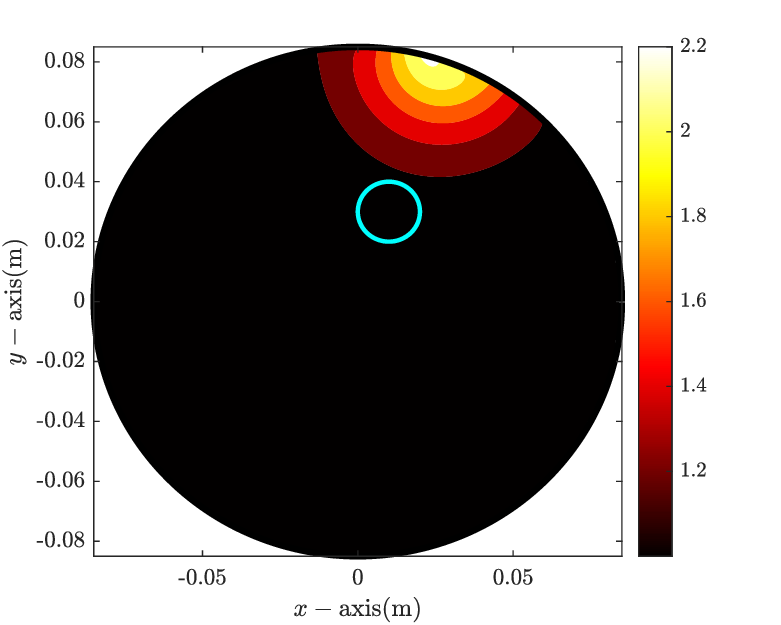}}
\caption{\label{Result_MU1} (Example \ref{EX-MU}) Maps of $\mathfrak{F}(\ka,\mr)$. Cyan-colored circle describes the boundary of anomaly.}
\end{center}
\end{figure}

\begin{figure}[h]
\begin{center}
\subfigure[$\mua=\mub$]{\includegraphics[width=0.33\textwidth]{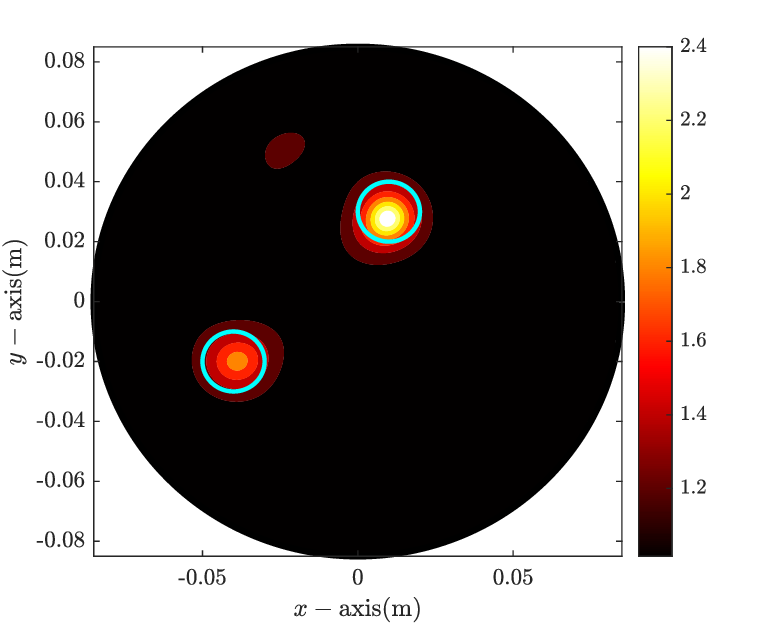}}\hfill
\subfigure[$\mua=2\mub$]{\includegraphics[width=0.33\textwidth]{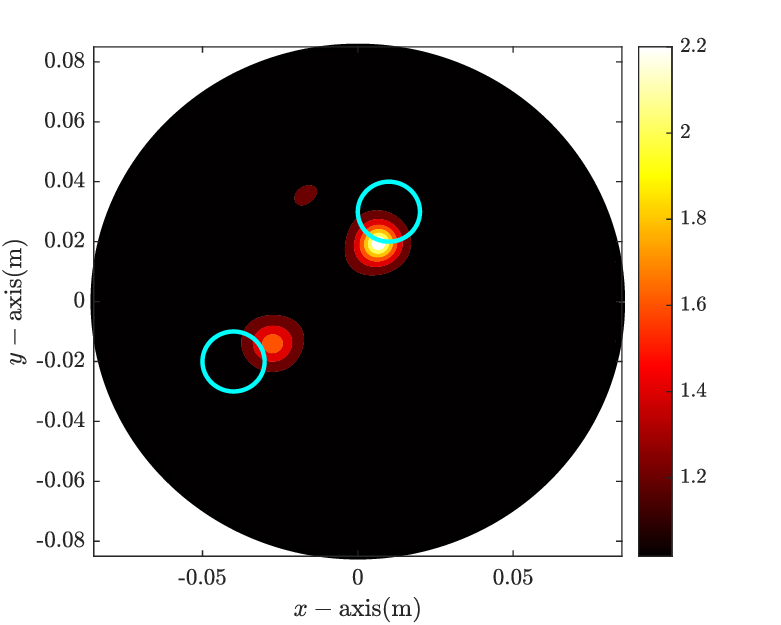}}\hfill
\subfigure[$\mua=10\mub$]{\includegraphics[width=0.33\textwidth]{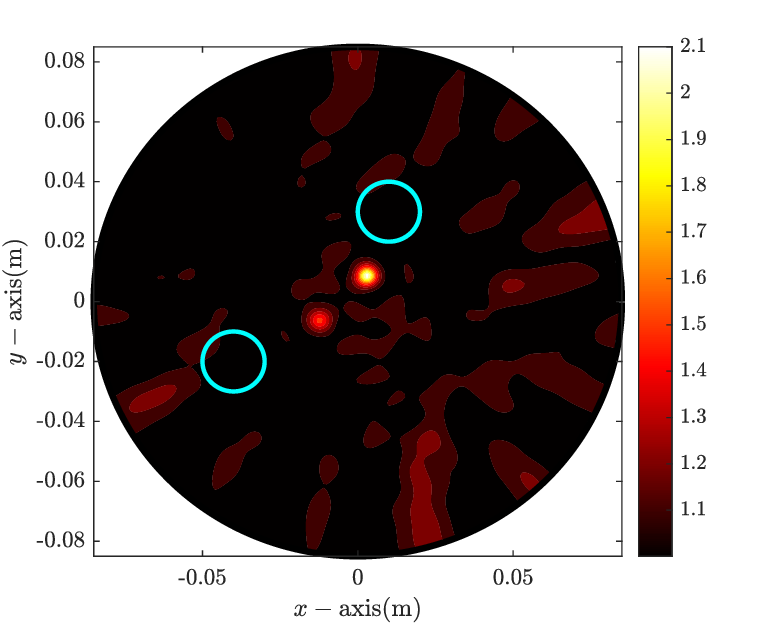}}\\
\subfigure[$\mua=0.5\mub$]{\includegraphics[width=0.33\textwidth]{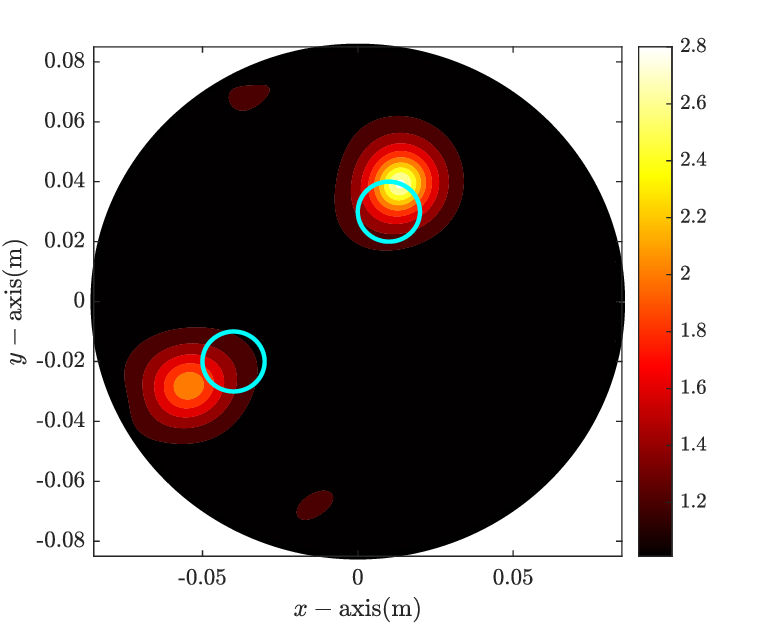}}\hfill
\subfigure[$\mua=0.2\mub$]{\includegraphics[width=0.33\textwidth]{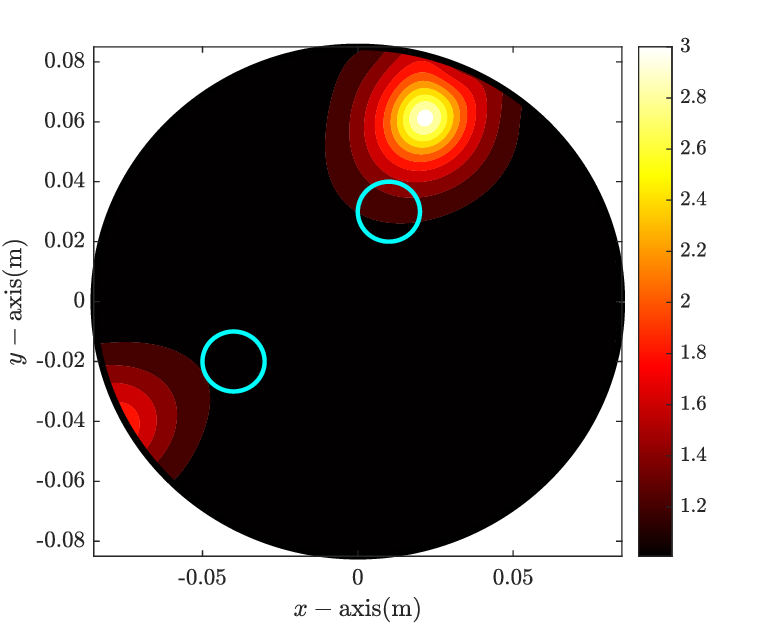}}
\subfigure[$\mua=0.1\mub$]{\includegraphics[width=0.33\textwidth]{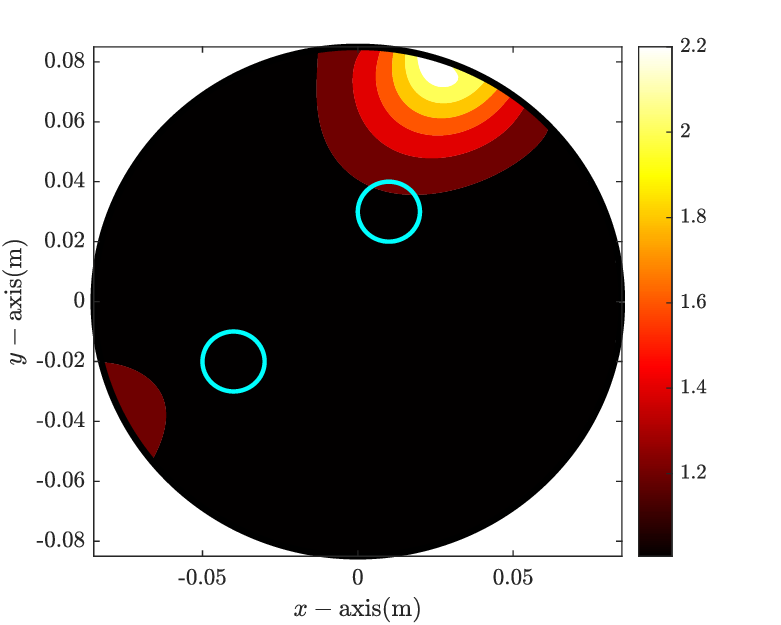}}
\caption{\label{Result_MU2} (Example \ref{EX-MU}) Maps of $\mathfrak{F}(\ka,\mr)$. Cyan-colored circles denote the boundary of anomalies.}
\end{center}
\end{figure}

\begin{example}[Application of inaccurate background permittivity]\label{EX-EPS}
Next, let us assume that only the true value of $\epsb$ is unknown, that is, we applied alternative wavenumber $\ka$ that satisfies
\[\ka^2=\omega^2\mub\left(\epsa+i\frac{\sigmab}{\omega}\right).\]
Note that, if $\epsa$ satisfies the condition \eqref{Condition}, the identified location will be
\[\mr=\left(\frac{\kb}{\ka}\right)\mr_\star=\sqrt{\frac{\omega\epsb+i\sigmab}{\omega\epsa+i\sigmab}}\mr_\star\approx\sqrt{\frac{\epsb}{\epsa}}\mr_\star.\]
Hence, similar to the Example \ref{EX-MU}, the identified location will approach and be far from the origin if $\epsa>\epsb$ and $\epsa<\epsb$, respectively.

Figure \ref{Result_EPS1} shows maps of $\mathfrak{F}(\ka,\mr)$ with various $\ka$ in the presence of $D_1$. Similar to the results in Figure \ref{Result_MU1}, the identified location approaches the origin as the value of $\epsa$ increases. Otherwise, the identified location becomes far from the origin as the value of $\epsa$ decreases. It is interesting to examine the size of the identified anomaly becomes small and large as $\epsa$ increases and decreases, respectively. We can observe the same phenomenon in the presence of multiple anomalies $D_1$ and $D_2$, as shown in Figure \ref{Result_EPS2}. 
\end{example}

\begin{figure}[h]
\begin{center}
\subfigure[$\epsa=\epsb$]{\includegraphics[width=0.33\textwidth]{True1}}\hfill
\subfigure[$\epsa=2\epsb$]{\includegraphics[width=0.33\textwidth]{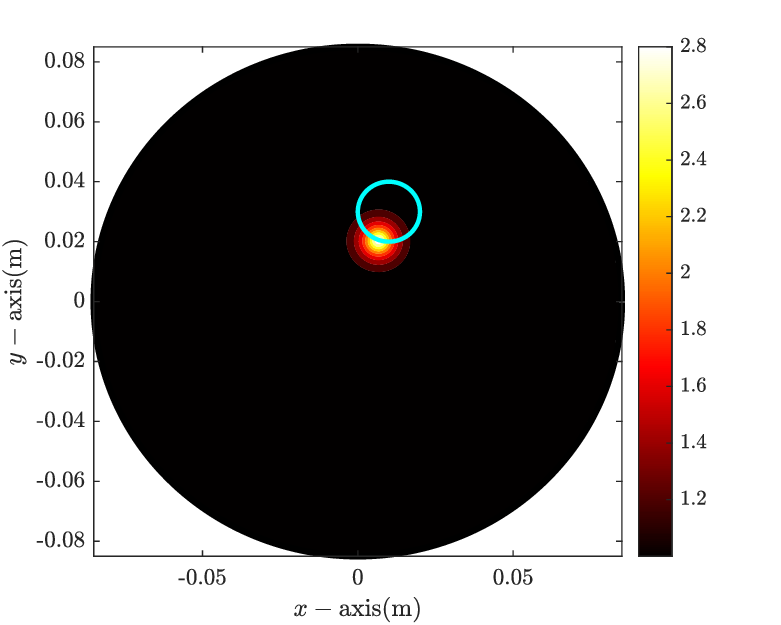}}\hfill
\subfigure[$\epsa=10\epsb$]{\includegraphics[width=0.33\textwidth]{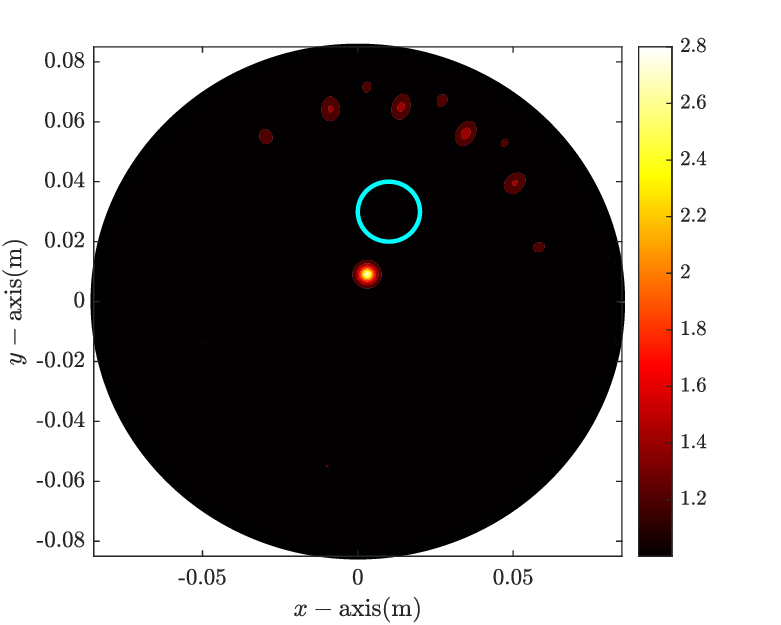}}\\
\subfigure[$\epsa=0.5\epsb$]{\includegraphics[width=0.33\textwidth]{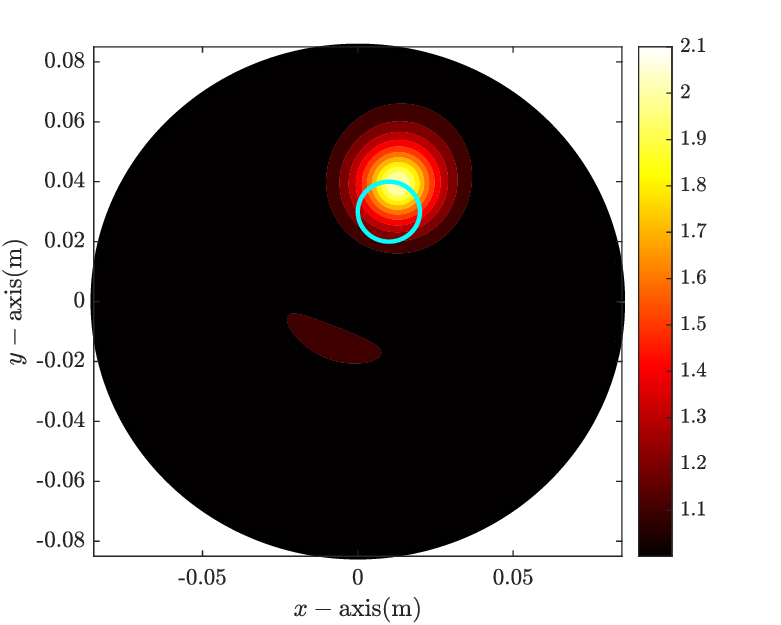}}\hfill
\subfigure[$\epsa=0.2\epsb$]{\includegraphics[width=0.33\textwidth]{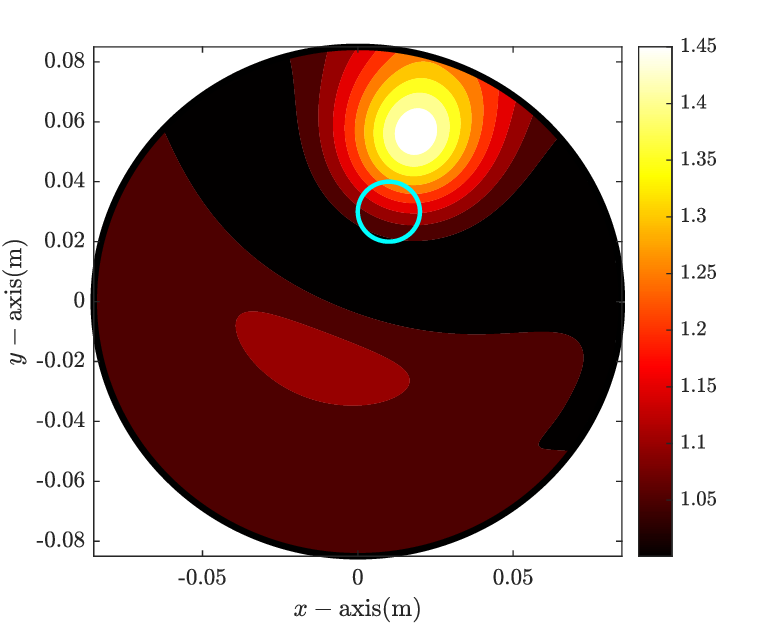}}
\subfigure[$\epsa=0.1\epsb$]{\includegraphics[width=0.33\textwidth]{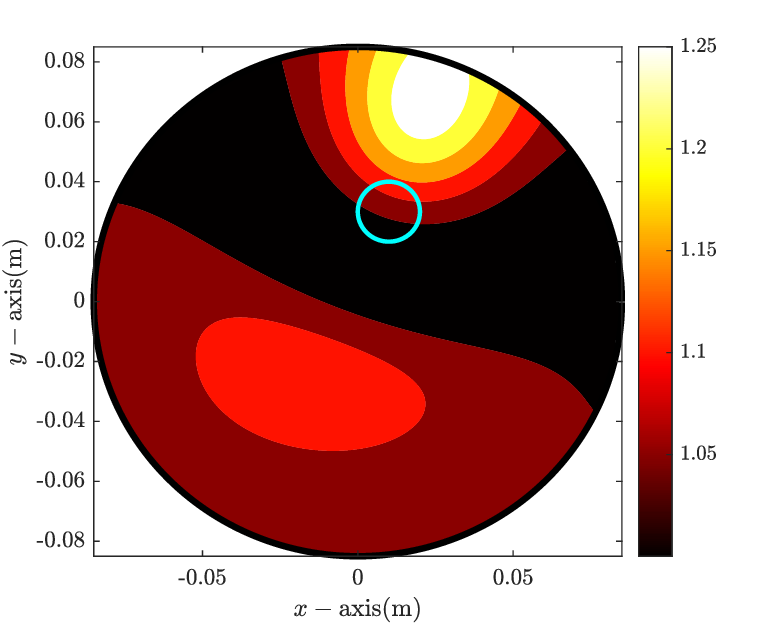}}
\caption{\label{Result_EPS1} (Example \ref{EX-EPS}) Maps of $\mathfrak{F}(\ka,\mr)$. Cyan-colored circle describes the boundary of anomaly.}
\end{center}
\end{figure}

\begin{figure}[h]
\begin{center}
\subfigure[$\epsa=\epsb$]{\includegraphics[width=0.33\textwidth]{True2}}\hfill
\subfigure[$\epsa=2\epsb$]{\includegraphics[width=0.33\textwidth]{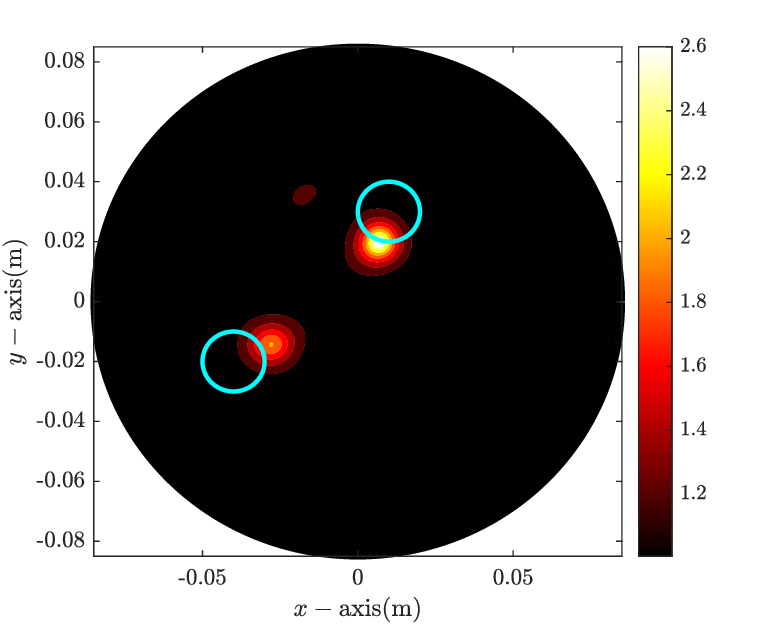}}\hfill
\subfigure[$\epsa=10\epsb$]{\includegraphics[width=0.33\textwidth]{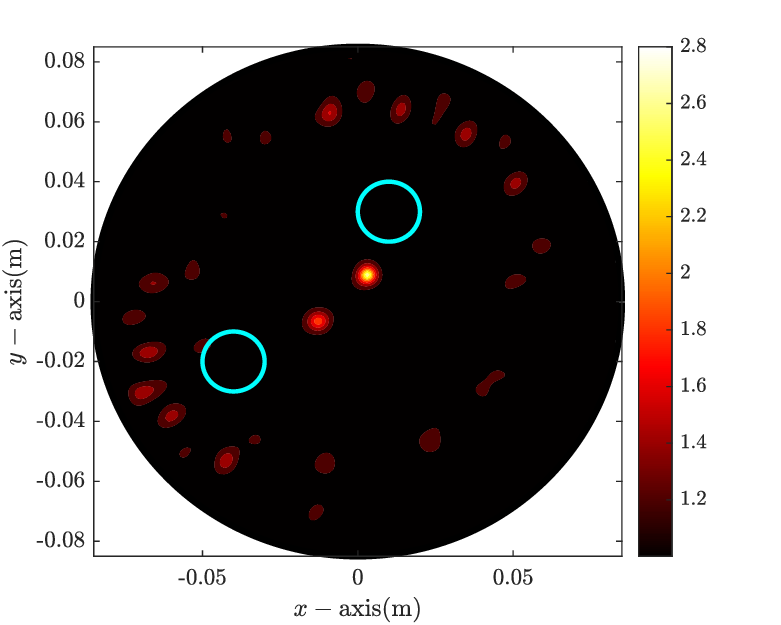}}\\
\subfigure[$\epsa=0.5\epsb$]{\includegraphics[width=0.33\textwidth]{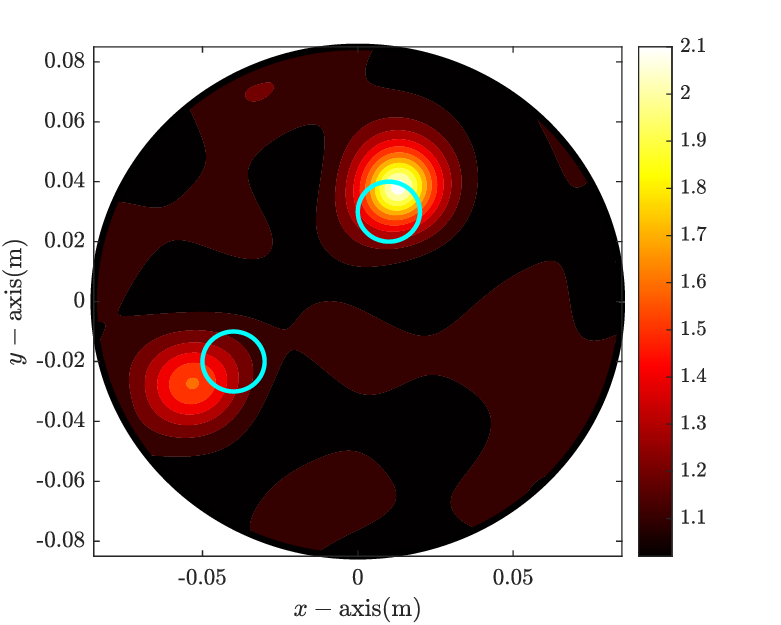}}\hfill
\subfigure[$\epsa=0.2\epsb$]{\includegraphics[width=0.33\textwidth]{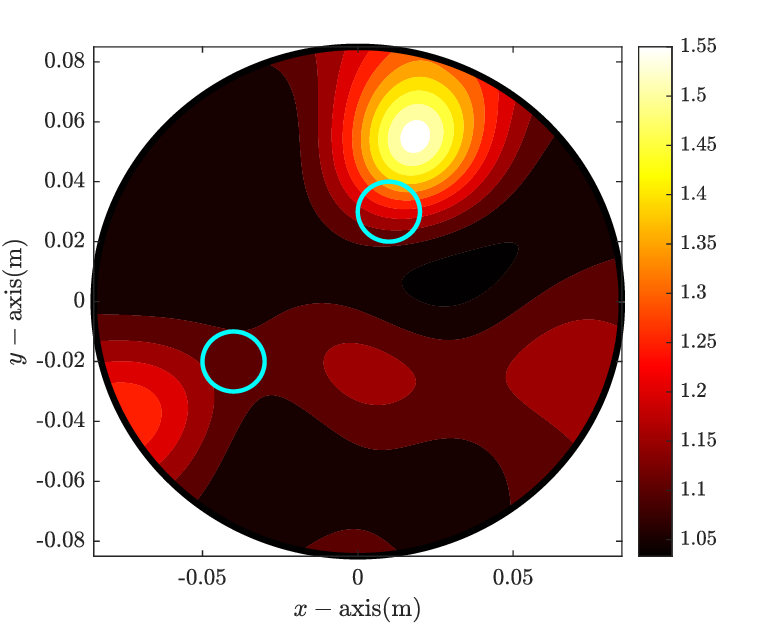}}
\subfigure[$\epsa=0.1\epsb$]{\includegraphics[width=0.33\textwidth]{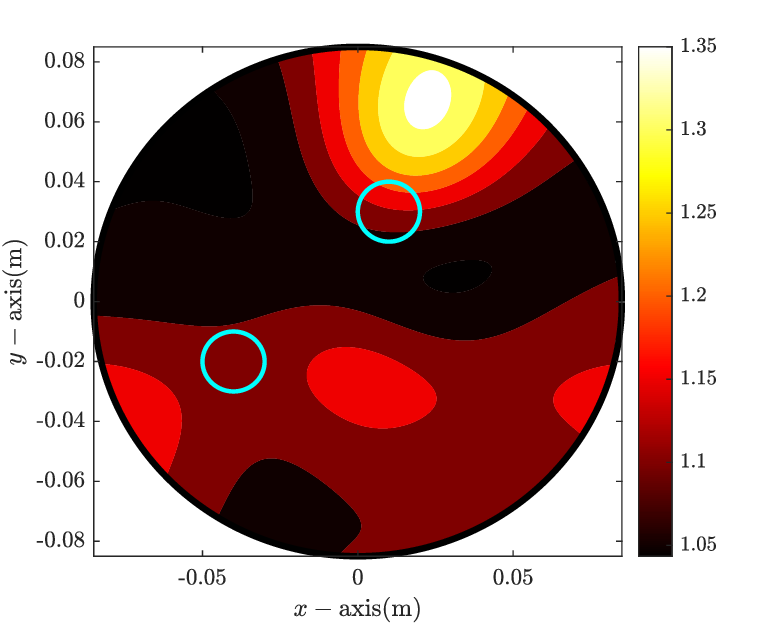}}
\caption{\label{Result_EPS2} (Example \ref{EX-EPS}) Maps of $\mathfrak{F}(\ka,\mr)$. Cyan-colored circles denote the boundary of anomalies.}
\end{center}
\end{figure}

\begin{example}[Application of inaccurate background conductivity]\label{EX-SIGMA}
Here, we consider the case when only the true value of $\sigmab$ is unknown and correspondingly, we apply an alternative wavenumber $\ka$ that satisfies
\[\ka^2=\omega^2\mub\left(\epsb+i\frac{\sigmaa}{\omega}\right).\]
Notice that opposite to the Examples \ref{EX-MU} and \ref{EX-EPS}, if $\sigmaa$ satisfies the condition \eqref{Condition}, the identified location will be
\[\mr=\left(\frac{\kb}{\ka}\right)\mr_\star=\sqrt{\frac{\omega\epsb+i\sigmab}{\omega\epsb+i\sigmaa}}\mr_\star\approx\sqrt{\frac{\omega\epsb}{\omega\epsb}}\mr_\star=\mr_\star.\]
Hence, it can be expected that almost exact location of the anomaly can be identified through the map of $\mathfrak{F}(\ka,\mr)$. This means that, it will be possible to identify the location of the anomaly by selecting a small value of $\sigmaa$ although its true value is unknown. Otherwise, if $\sigmaa$ does not satisfy the condition \eqref{Condition}, it will be impossible to identify the anomaly because the Born approximation cannot be applied to design the imaging function.

Figure \ref{Result_SIGMA1} shows maps of $\mathfrak{F}(\ka,\mr)$ with various $\ka$ in the presence of $D_1$. As we discussed previously, it is possible to identify almost exact location of $D_1$ if the value of $\sigmaa$ is sufficiently small. However, owing to the appearance of unexpected artifacts with large magnitudes, it is very difficult to recognize the location of $D_1$ if the value of $\sigmaa$ is not small. Hence, contrary to Example \ref{EX-EPS}, selecting a small value of $\sigmaa$ will guarantee successful identification of the anomaly without accurate value of $\sigmab$. We can observe the same phenomea in the presence of multiple anomalies, as shown in Figure \ref{Result_SIGMA2}.
\end{example}

\begin{figure}[h]
\begin{center}
\subfigure[$\sigmaa=\sigmab$]{\includegraphics[width=0.33\textwidth]{True1}}\hfill
\subfigure[$\sigmaa=2\sigmab$]{\includegraphics[width=0.33\textwidth]{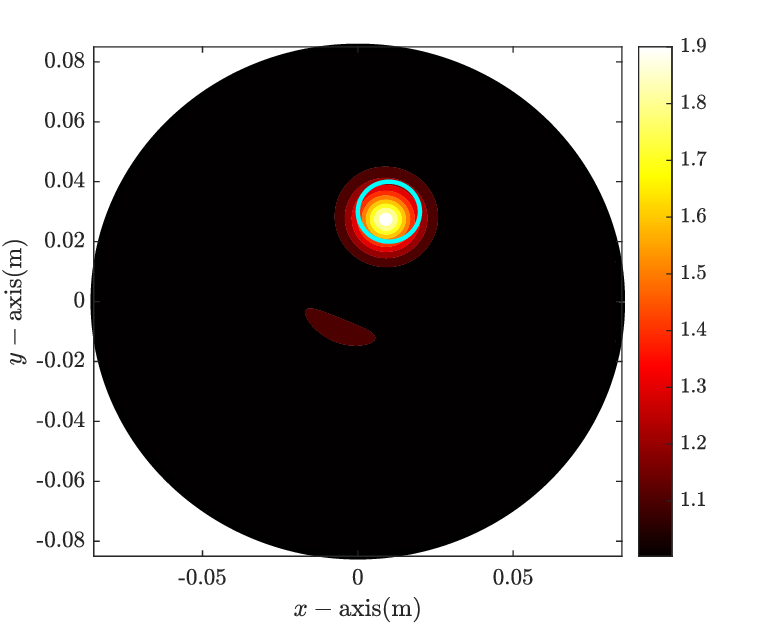}}\hfill
\subfigure[$\sigmaa=10\sigmab$]{\includegraphics[width=0.33\textwidth]{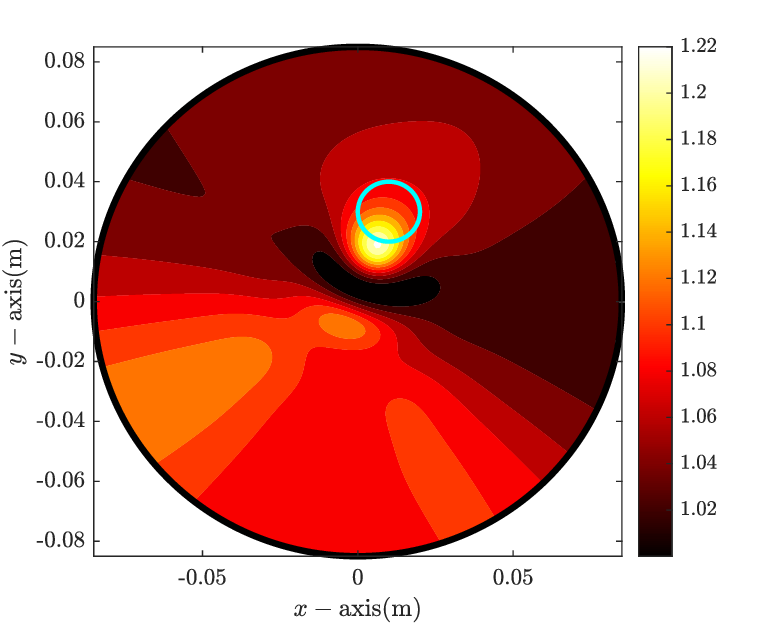}}\\
\subfigure[$\sigmaa=20\sigmab$]{\includegraphics[width=0.33\textwidth]{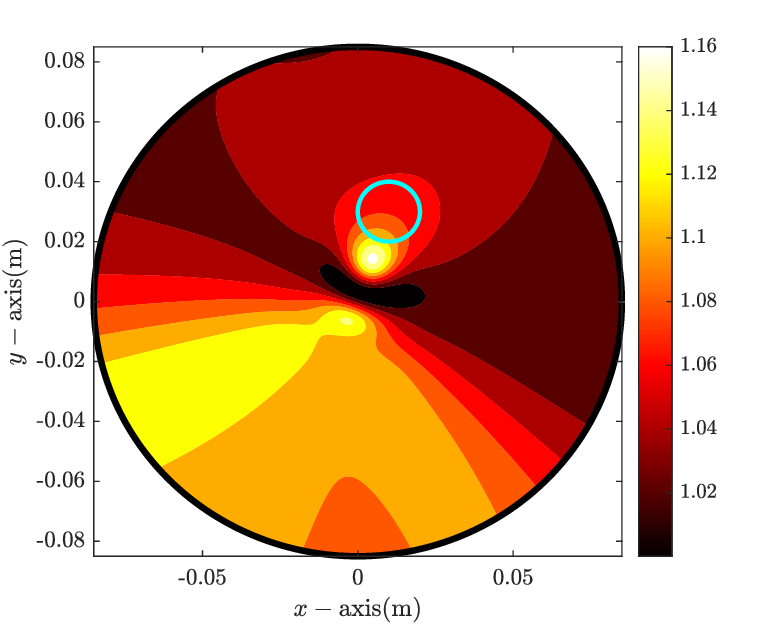}}\hfill
\subfigure[$\sigmaa=0.2\sigmab$]{\includegraphics[width=0.33\textwidth]{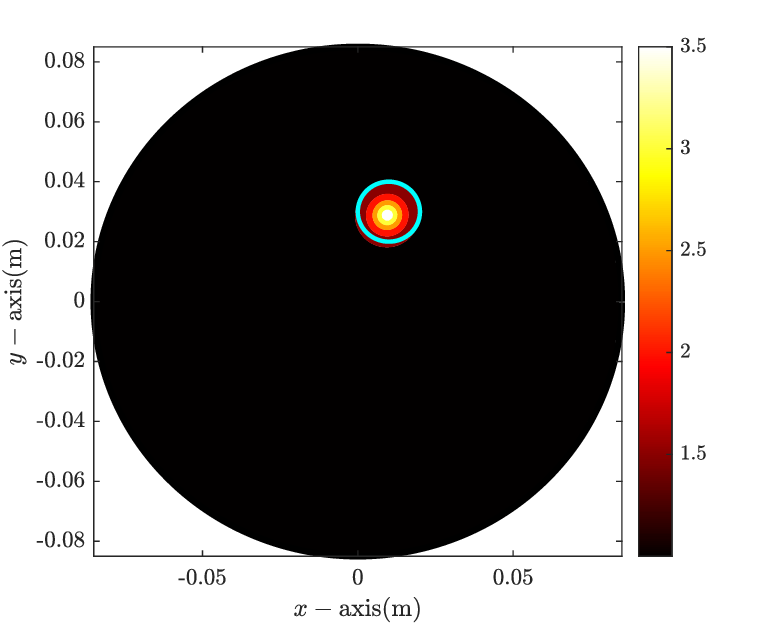}}\hfill
\subfigure[$\sigmaa=0.1\sigmab$]{\includegraphics[width=0.33\textwidth]{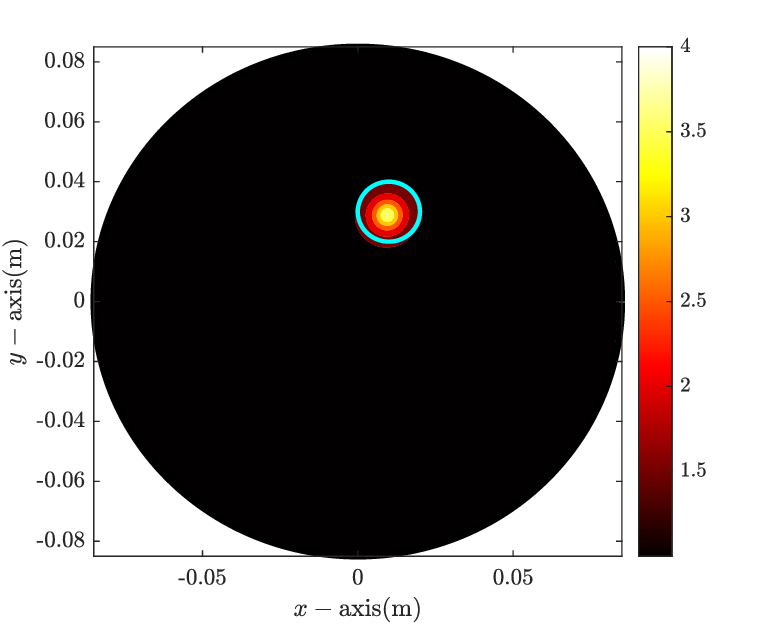}}
\caption{\label{Result_SIGMA1} (Example \ref{EX-SIGMA}) Maps of $\mathfrak{F}(\ka,\mr)$. Cyan-colored circle describes the boundary of anomaly.}
\end{center}
\end{figure}

\begin{figure}[h]
\begin{center}
\subfigure[$\sigmaa=\sigmab$]{\includegraphics[width=0.33\textwidth]{True2}}\hfill
\subfigure[$\sigmaa=2\sigmab$]{\includegraphics[width=0.33\textwidth]{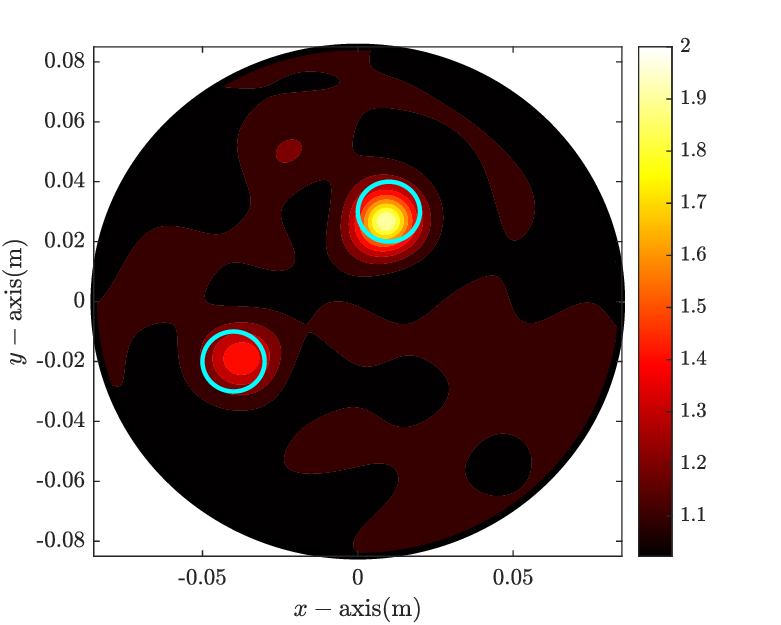}}\hfill
\subfigure[$\sigmaa=10\sigmab$]{\includegraphics[width=0.33\textwidth]{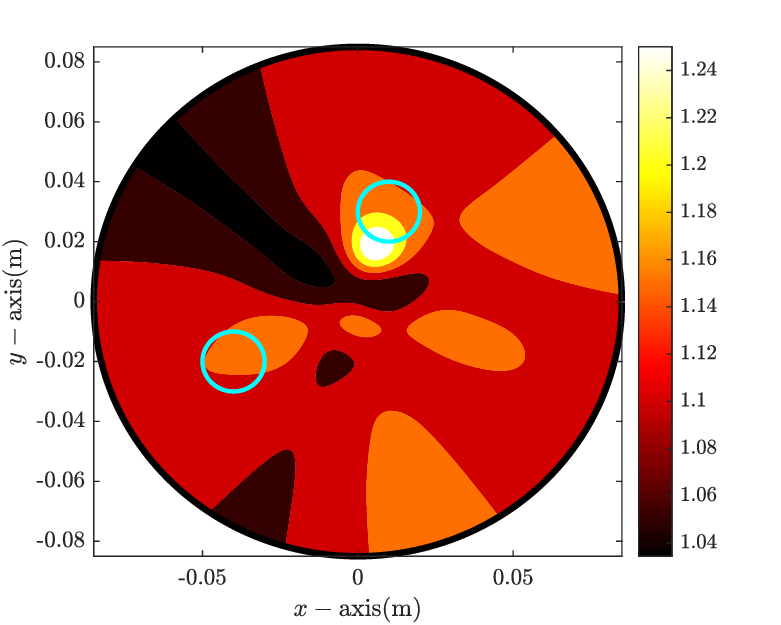}}\\
\subfigure[$\sigmaa=20\sigmab$]{\includegraphics[width=0.33\textwidth]{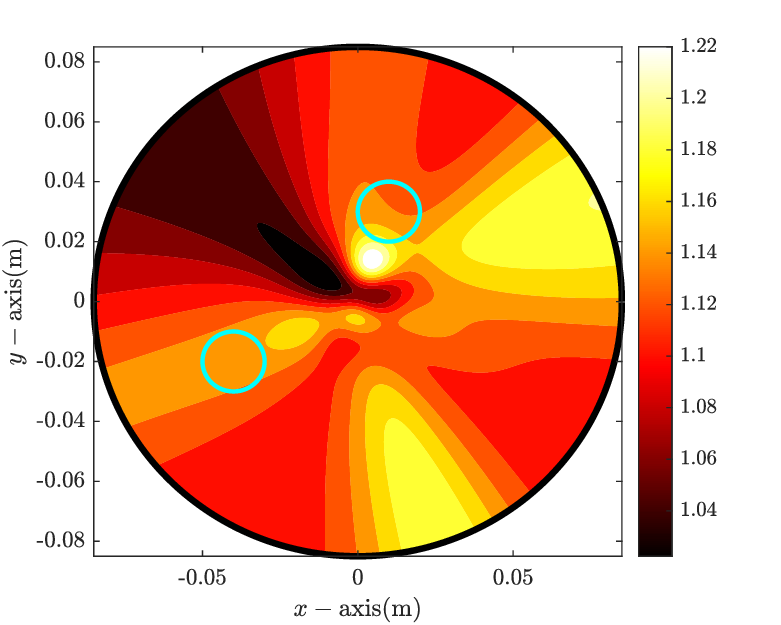}}\hfill
\subfigure[$\sigmaa=0.2\sigmab$]{\includegraphics[width=0.33\textwidth]{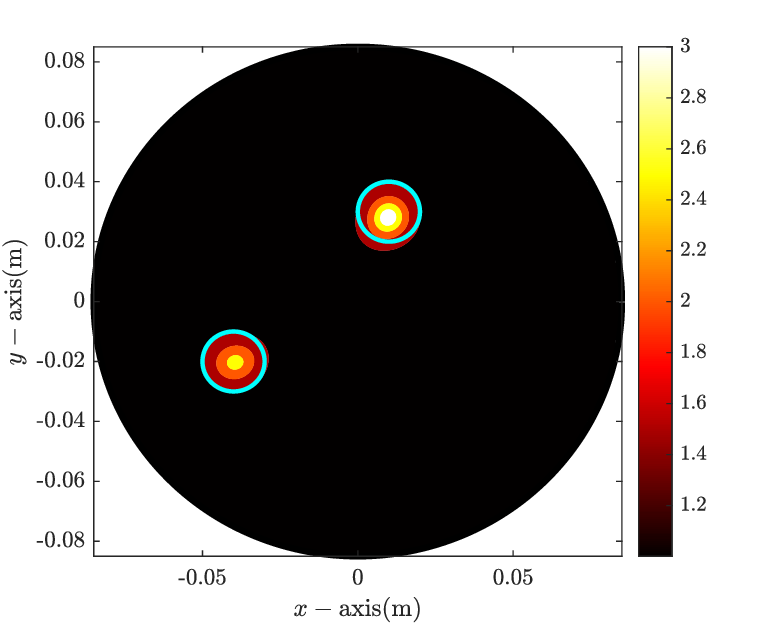}}\hfill
\subfigure[$\sigmaa=0.1\sigmab$]{\includegraphics[width=0.33\textwidth]{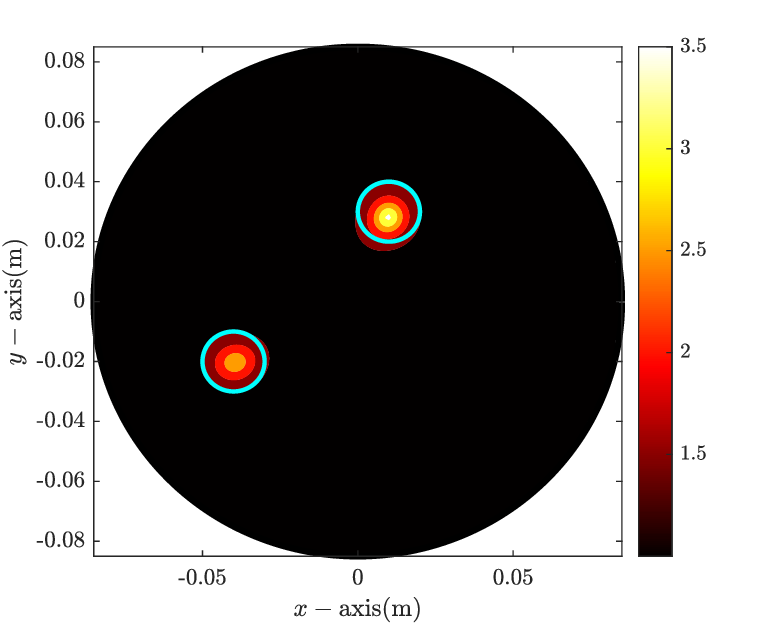}}
\caption{\label{Result_SIGMA2} (Example \ref{EX-SIGMA}) Maps of $\mathfrak{F}(\ka,\mr)$. Cyan-colored circles describe the boundary of anomalies.}
\end{center}
\end{figure}

\section{Conclusion}\label{sec:5}
Based on the integral equation for the scattered-field $S-$parameter and singular value decomposition of the scattering matrix in the presence of a small anomaly, we showed that the imaging function of MUSIC can be expressed by an infinite series of Bessel functions and applied wavenumber. Thanks to the theoretical result, we confirmed why an inaccurate location of the anomaly was retrieved when inaccurate value of background permeability, permittivity, or conductivity. However, the relationship between the retrieved size of the anomaly and the applied wavenumber remains unknown. It will be interesting to investigate a mathematical theory to explain this phenomenon. Moreover, the development of an effective algorithm for estimating exact value of background wavenumber would be a valuable addition to this work.

\section*{Acknowledgments}
This research was supported by the National Research Foundation of Korea (NRF) grant funded by the Korea government (MSIT) (NRF-2020R1A2C1A01005221).

\end{document}